\def\BibTeX{{\rm B\kern-.05em{\sc i\kern-.025em b}\kern-.08em
    T\kern-.1667em\lower.7ex\hbox{E}\kern-.125emX}}
\newtheorem{prob}{Problem}
\newtheorem{thm}{Theorem}
\newtheorem{cor}{Corollary}
\newtheorem{prop}{Proposition}
\newtheorem{rem}{Remark}
\begin{document}

\title{Bounding Privacy Leakage in Smart Buildings}

\author{Rijad Alisic, Marco Molinari, Philip E. Par\'e, and Henrik Sandberg*\thanks{*Rijad Alisic, Philip E. Par\'e, and Henrik Sandberg are with the Division of Decision and Control Systems and Marco Molinari is with the Division of Applied Thermodynamics and Refrigeration at KTH Royal Institute of Technology, Sweden (e-mail: rijada, philipar, hsan, marcomo@kth.se)}}

\maketitle

\begin{abstract}
Smart building management systems rely on sensors to optimize the operation of buildings. If an unauthorized user gains access to these sensors, a privacy leak may occur. This paper considers such a potential leak of privacy in a smart residential building, and how it may be mitigated through corrupting the measurements with additive Gaussian noise. This corruption is done in order to hide the occupancy change in an apartment. A lower bound on the variance of any estimator that estimates the change time is derived. The bound is then used to analyze how different model parameters affect the variance. It is shown that the signal to noise ratio and the system dynamics are the main factors that affect the bound. These results are then verified on a simulator of the KTH Live-In Lab Testbed, showing good correspondence with theoretical results.




\end{abstract}

\section{Introduction} \label{sec:intro}

There is a growing trend of employing smart meters in the home, facilitating utility tracking in order to help lower costs. For example, choosing when to use energy to heat your home could be made more flexible with thermal batteries~\cite{scholtenpersistesi2017}. Automatic control for heating and cooling systems is a popular research topic. A common way to reduce energy usage has been to employ predictive controllers on HVAC systems~\cite{haokowlilinbarooahmeyn2013,parisioetal2014,barettlinder2015,raoukil2017}. Predictive controllers usually need user-generated in order to work properly.
This data contains information about the users' preferences, behaviors, and other private information. Historically, security measures against privacy breaches have mostly dealt with keeping an adversary from obtaining this data. Recent research focuses instead on how someone with access to the data may abuse it {for other purposes than what it was intended for}. A common concern is that there is no consensus around {the definition of} privacy, since it is application dependent.

In computer science, differential privacy is a concept that is widely used when protecting against those types of privacy attacks. A differentially private database can release structures and patterns of its data to anyone, but manages to keep the individual entries private. Typically, the released data is corrupted with additive noise from slowly decaying distributions~\cite{dwork2008}. The notion of differential privacy has been extended to dynamical systems as well. In~\cite{lenypapas2014}, the individual inputs to a multiple input system are kept from being estimated by injecting noise in the output.

Another example of privacy is considered in~\cite{farokhisandberg2018}. There a battery is placed between a household and a smart meter measuring energy consumption. The battery injects additive noise in order to corrupt the signal such that the estimation variance of the true consumption is made large{ through the minimization of the Fisher information}. These results are expanded for constrained noise in~\cite{farokhisandberg2019}, where the adversary is able to send queries to a more general database. 

In this paper, privacy leakage refers to the ability to infer user behavior from a set of measurements. These measurements are taken from a system that the user has interacted with. The user behavior is modeled as an input to a linear system, where the measurements are used in order to reconstruct this input. Input reconstruction is a well-researched problem, where conditions on what sensors are needed for reconstruction, so called input-output observability, have been investigated in, for example,~\cite{boukhobza2007,houpatton98,gracygarinkibangou18}.


Here, on the other hand, the ability to reconstruct the input from measurements is not primarily limited by the sensors, but rather by how much the input is hidden by additive Gaussian noise on the measurements. We restrict ourselves to the cases of step inputs, turning the problem of input reconstruction into a multiple hypothesis test problem, known as the \emph{change point problem}. Change point problems are  well-researched; 
see~\cite{bassevillenikiforov1993} and~\cite{lehmanncasella1998}. 
Defining privacy in the realm of hypothesis testing has been considered in~\cite{lioechtering2015,lioechtering2016,lioechtering2017}, with privacy being defined as missed detection.

A central result is that there exist a \emph{uniform minimum variance estimator} (UMVE) for detecting the size of the step changes, if the change time is known~\cite{deshayespicard1986}. The authors also show that a UMVE for unknown change times exists only if the sampling rate goes to infinity. However, for unknown change times and finite sampling density, no UMVE exists. The change time is typically estimated by using a UMVE on the amplitude change, conditioned on all possible change times. In this paper, privacy is defined as the variance of the estimated change time{, for which we provide a lower bound on in order to gain intuition on how to maximize privacy}. 
We focus on how the system dynamics affect the estimation variance of the input when the measurements are corrupted by additive Gaussian noise.


The paper is organized as follows, Section~\ref{sec:example} introduces the KTH Live-in Lab, which is a smart residential building. The attack model is defined and then an example of a privacy leak 
is given. The section concludes by defining a defense strategy. Section~\ref{sec:probform} formulates the privacy problem before a solution is presented in Section~\ref{sec:theormodel}. Additionally, in Section~\ref{sec:theormodel}, a simple example depicts some important qualities that affect the occupancy change estimation. Section~\ref{sec:examplrevis} returns to the KTH Testbed in order to reconnect with the previous example and show how the results from Section~\ref{sec:theormodel} perform in a more realistic situation. Finally, the results are concluded in Section~\ref{sec:end}.

\section{Motivating Example} \label{sec:example}

In this section we show an example of how privacy can be leaked through poorly protected sensors. First, let us introduce the KTH Testbed and how a privacy leak may occur. Subsequently we introduce a way to mitigate the leak.


\subsection{KTH Live-In Lab Testbed}
The data for this example is taken from the IDA ICE 4.8 software program for building simulations~\cite{idasimulator} of the KTH Live-In Lab Testbed, (Fig.~\ref{fig:poplll}),~\cite{liveinlabkth}. The KTH Testbed premises 
120 square meters of living space, 150 square meters of technical space and a project office of approximately 20 square meters. In its current configuration the living space features four apartments, but the layout can be reconfigured for up to eight apartments.
The KTH Testbed, which is part of the larger Live-In Lab testbed platform, is designed to be energetically independent, with dedicated electricity generation systems through photovoltaic panels, heat generation system (ground source heat pumps), and storage (electricity and heat). Sensors are extensively used to improve energy efficiency and indoor comfort, study user behavior, and to improve control and fault detection strategies.


\begin{figure}
    \centering
    \includegraphics[scale=0.4,trim={0 0 0 0.7cm},clip]{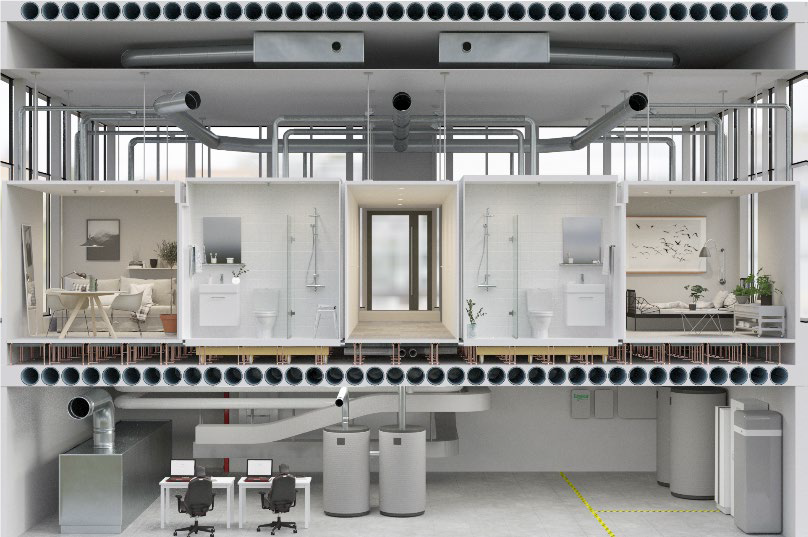}
    \caption{A cross-sectional view of the KTH Testbed, showing two separate apartments, the technical space below, and the HVAC system.}
    \label{fig:poplll}
\end{figure}
While the KTH Testbed has several types of sensors that could be used for detecting occupancy, we restrict ourselves to the temperature and relative humidity sensors here, because they are easy to correlate with the occupancy. Since there are multiple sensors of both types in each apartment, we use the mean of the sensors. Fig.~\ref{fig:apartfull} shows the temperature and relative humidity in a), and the occupancy in b) of Apartment~2. The data was obtained using the IDA ICE 4.8 simulator, sampled once every 9 minutes for a week. 

\begin{figure}
    \centering
    \begin{overpic}[width=1\linewidth]{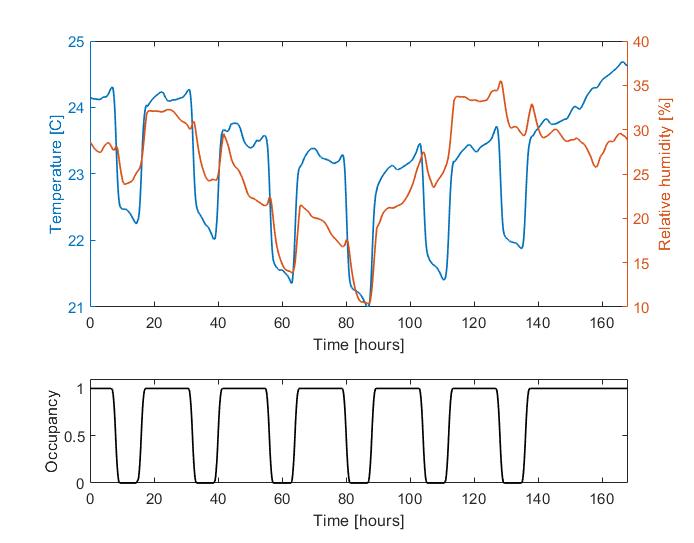}
   \put(2,73){a)}
   \put(2,24){b)}
    \end{overpic}
    \caption{The two graphs in plot a) show the temperature and relative humidity in Apartment~2 simulated for one week. The graph in plot b) shows the occupancy of the same apartment during the week.}
    \label{fig:apartfull}
\end{figure}

\subsection{Attack Model}\label{sec:attackmodel}
Let an adversary gain access to one of the types of sensors, either the temperature or humidity. Assume that the adversary has access to a linear time invariant model of the apartment dynamics,
\begin{equation}\label{eq:nonoisemodel}
M_0 : \begin{cases}
    x_{k+1} = Ax_k+Bu_k \\
    \hspace{11pt} y_k = Cx_k ,
\end{cases}  
\end{equation}
where $x_k \in \mathbb{R}^n$ is a vector of system states at each time instance $k$, $(u_k )_{k=0}^{N-1}$ is a sequence of scalar input signals representing occupancy, with $u_k \in \{0,1\}$ $\forall k$, and $y_k \in \mathbb{R}$ are the measurements produced at time instance $k$. The system matrices, $A \in \mathbb{R}^{n \times n}$, $B \in \mathbb{R}^{n \times 1}$ and $C \in \mathbb{R}^{1 \times n}$ define the dynamics of the sequence of states $(x_k)_ {k=0}^N$. Additionally, the measurement window length is denoted by $N$ and the $0$.

The system parameters of \eqref{eq:nonoisemodel} were determined from the input-output data of Apartment~1, 2 and 4 (see Fig.~\ref{fig:map}) using \mbox{MATLAB's} System Identification Toolbox. An 8-state system was obtained for both the temperature, denoted as $M^t$, and relative humidity, denoted as $M^h$. The IDA-ICE 4.8 simulator is non-linear, which makes \eqref{eq:nonoisemodel} an approximation of the true system. Apartment~3 was empty during the simulation, with the intention to capture the effect that the simulated weather has on the apartments, which the attacker also has access to. 
However, since the weather affects each apartment differently, it was not possible to completely remove its effects, creating some difficulties when identifying a good input-output model.

\begin{figure}
    \centering
    \includegraphics[width=1\linewidth,trim={7.6cm 4.7cm 7.6cm 3.2cm},clip]{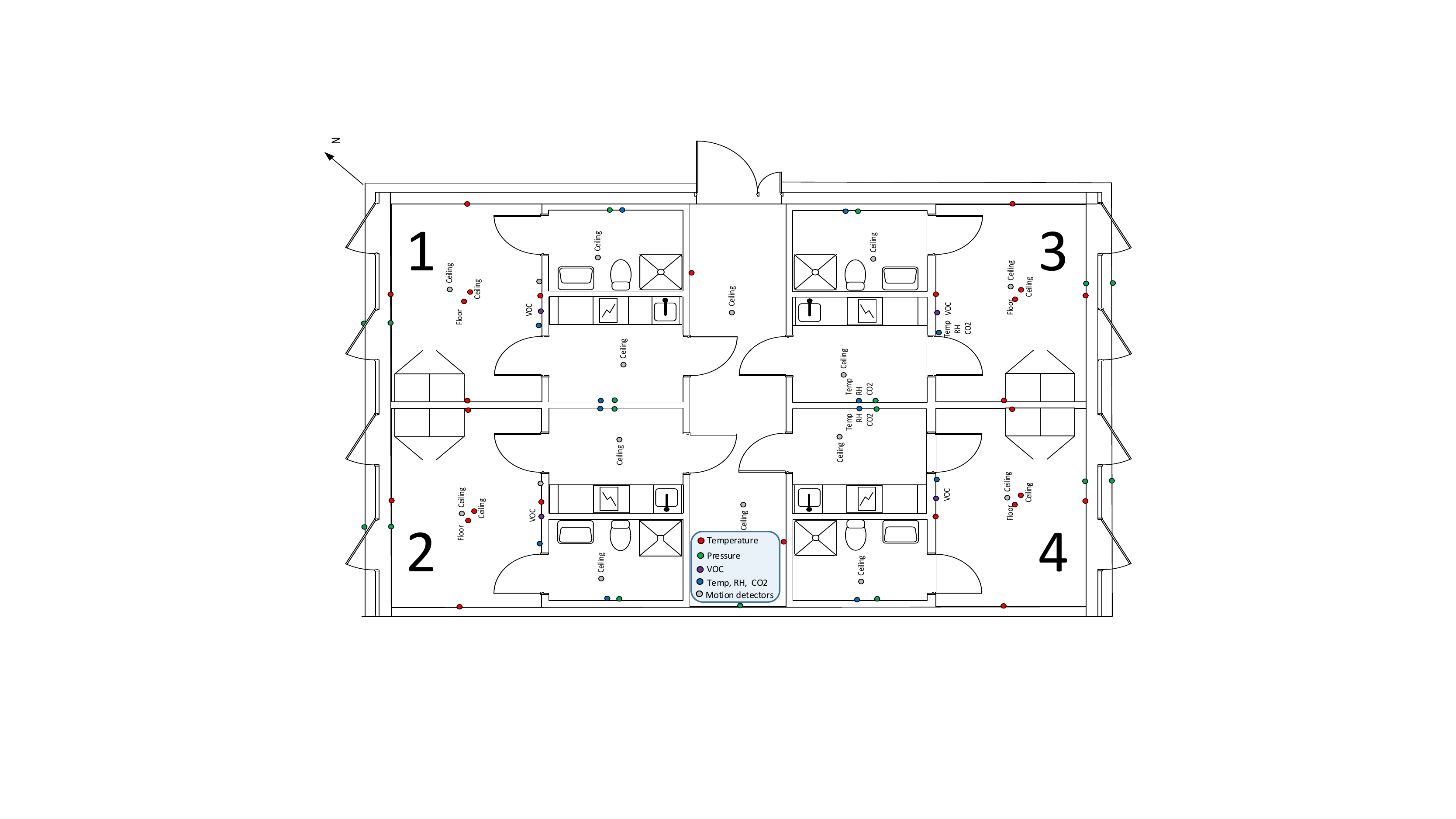}
    \caption{An overview of the apartments in the KTH Testbed.}
    \label{fig:map}
\end{figure}

\subsection{A Privacy Leak}

Consider a person entering their apartment, which has already reached a comfortable temperature and humidity range. Usually, entering a room is similar to adding a source of heat and humidity, almost instantly. One can model this change as a step input signal to the system \eqref{eq:nonoisemodel}, $u_k = 1 \in \mathbb{R}$ for $k \geq k^*$, and zero otherwise. The measurement of the step response at each time step, $y_k$, is sent to the controller.

Assume an adversary gains access to these measurements. In order to estimate when the occupancy changes, the adversary tries to solve a change point problem, as is described in Section~\ref{sec:intro}, since the signal $y_k$ behaves in one way up until $k^*$, and after that, it behaves differently. The adversary estimates the input by estimating the amplitude of the change through minimization of the $\ell_2$-norm, conditioned on all possible change times, $k^*$. Later, the measurements will be corrupted with additive Gaussian noise, and then the choice of this particular estimator is justified, since it becomes equivalent to the {Full Information Estimator~\cite{rawlings2013}}.

With this {method}, the adversary estimates the occupancy change in Apartment~2. 
The results of the estimation using the temperature sensor are shown in Fig.~\ref{fig:estnonoise}, together with the true occupancy of the apartment. One can see that the estimation correctly predicts the time step for when the person gets home.

\begin{figure}
    \centering
    \includegraphics[width=1\linewidth,trim={4cm 11.3cm 4cm 11.3cm},clip]{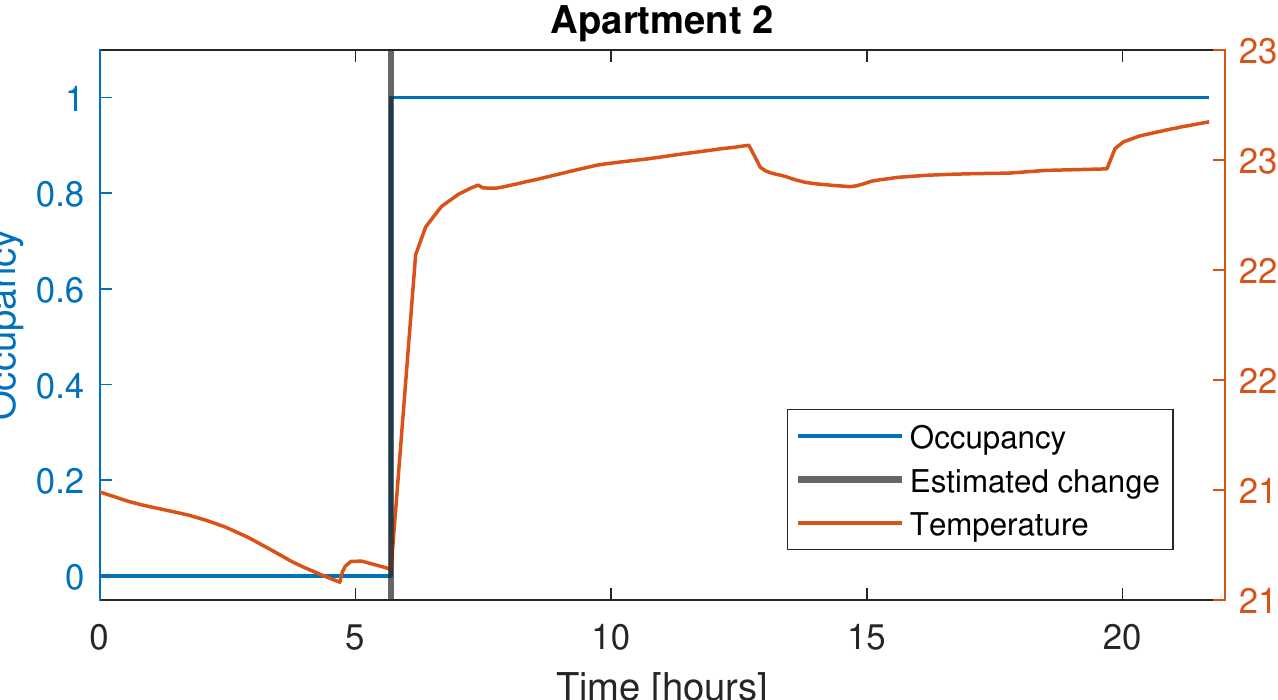}
    \caption{The graph shows the simulated change in occupancy and how it affects the temperature over 20 hours. The data is used to produce an estimate of the occupancy change.}
    \label{fig:estnonoise}
\end{figure}

\subsection{Defender Model}
Assume that the defender of the system has limited resources such that only one of the types of sensors in each apartment can be made secure against privacy leaks. The defender of the system needs to make a choice about which of the measurements to leave susceptible to privacy attacks. It is seen in Fig.~\ref{fig:estnonoise} that it is relatively easy to estimate the occupancy change unless the signal is somehow corrupted. In the next section, we propose that the measurements be corrupted with additive Gaussian noise in order to introduce uncertainty in the estimation of the occupancy change. This method is similar to earlier work; see~\cite{dwork2008} and~\cite{lenypapas2014}, which also propose corruptions using Gaussian or Laplacian noise in order to introduce privacy in the form of Differential Privacy. 

While adding a lot of noise to the sensor will hide the input, there is usually a utility function that the controller is trying to minimize, such as energy cost. Adding noise will generally increase this quantity. Therefore it is important for the defender to find a low noise level that ensures privacy.

\section{Problem Formulation} \label{sec:probform}
Consider again the model of the system the adversary desires to spy on \eqref{eq:nonoisemodel}. Let the measurements, defined as $Y = (y_k)_{k=0}^N$, $y_k\in \mathbb{R}$, be corrupted by additive Gaussian noise. The measurements are then generated by the system
\begin{equation} \label{eq:model}
M : \begin{cases}
    x_{k+1} = Ax_k+Bu_k \\
    \hspace{11pt} y_k = Cx_k  + e_k,
\end{cases}  
\end{equation}
where $ (e_k )_{k=0}^N$ is a sequence of zero mean, Gaussian noise with $\mathbb{E}\left [ e_ke_l\right] = \sigma^2 \delta_{kl}$ and $\delta_{kl}$ is the Kronecker delta.




Now consider an estimator $\psi \left(Y,M \right)$, which may have access to the system model, $M$, that tries to estimate the sequence of inputs $(u_k)_{k=0}^{N-1}$ from the outputs $Y$. As stated in Section~\ref{sec:intro}, when the change time is not known, it needs to be estimated first before the amplitude of $u_k$ can be estimated. Guarding the change time from being estimated becomes the first line of defense, which also justifies why we only consider $u_k \in \{0 ,\, 1\}$. Additionally, since there are no UMVE for the change time, all possible estimators need to be considered. In this setting, we seek to find what factors affect the variance of the estimated change time.

\begin{prob} \label{prob:problem}
Consider an estimator of $k^*$, denoted by $\psi(Y,M)$, that has access to the model~\eqref{eq:model} and the measurements $Y$ of length $N$ such that $N \geq k^*$. What is the minimum variance that any such estimator can achieve?
\end{prob}

\begin{rem}
Implicitly, the model in \eqref{eq:model} gives the estimator only access to one type of sensor. Note that only looking at one sensor at a time is a restriction when performing the analysis. We plan to expand to include the combination of multiple sensors in future work.
\end{rem}


\section{Main Results} \label{sec:theormodel}
Recall the definition of privacy as the variance of the estimated change time. Problem~\ref{prob:problem} asks how much privacy can be guaranteed in a privacy leak. In order to answer this question, we present the main theoretical result of this paper.

\begin{thm} \label{thm:main}
Consider any estimator of $k^*$, denoted by $\psi(Y,M)$, with bias $g(k^*)$. Then 
\begin{equation} \label{eq:bound}
    \mathrm{Var}(\psi(Y,M)|k^*) \geq \max \limits_{\tau} \frac{(\tau + g(k^* +\tau) - g(k^*))^2}{\mathrm{e}^{{ \mathcal{S}(\tau,M) }}-1} \eqqcolon \mathcal{B}\left(M \right),
\end{equation}
for $\tau \in \{1, \,  \dots, \, N-k^*\}$, where
\begin{equation} \label{eq:sum}
    \mathcal{S}(\tau,M) = \frac{1}{\sigma^2}\sum \limits_{k= k^* + 1}^N\left(\sum \limits_{l= k^* }^{\mathrm{min}( k^* + {\tau} -1,k-1)} CA^{k-1-l}B \right)^2.
\end{equation}
\end{thm}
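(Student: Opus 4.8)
The parameter to be estimated, $k^*$, is discrete-valued, so the classical Cram\'er--Rao bound (which requires differentiability in the parameter) does not apply. My plan is instead to invoke the Hammersley--Chapman--Robbins (HCR) inequality, which lower bounds the variance of \emph{any} estimator, biased or not, and holds for discrete parameters. Let $P_{k^*}$ denote the distribution of $Y$ when the true change time is $k^*$, and let $L_\tau = p_{k^*+\tau}(Y)/p_{k^*}(Y)$ be the likelihood ratio against the shifted hypothesis, for admissible shifts $\tau \in \{1,\dots,N-k^*\}$. I would first derive the HCR inequality from Cauchy--Schwarz: since $\mathbb{E}_{k^*}[L_\tau]=1$ and $\mathbb{E}_{k^*}[\psi L_\tau]=\mathbb{E}_{k^*+\tau}[\psi]$, the covariance between $\psi$ and $L_\tau$ under $P_{k^*}$ equals $\mathbb{E}_{k^*+\tau}[\psi]-\mathbb{E}_{k^*}[\psi]$, and squaring with Cauchy--Schwarz yields $\mathrm{Var}(\psi\mid k^*)\geq (\mathbb{E}_{k^*+\tau}[\psi]-\mathbb{E}_{k^*}[\psi])^2/\mathrm{Var}_{k^*}(L_\tau)$ for every $\tau$, hence for the maximizing $\tau$.

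I would then treat numerator and denominator separately. For the numerator, the bias definition gives $\mathbb{E}_{k^*}[\psi]=k^*+g(k^*)$, so $\mathbb{E}_{k^*+\tau}[\psi]-\mathbb{E}_{k^*}[\psi]=\tau+g(k^*+\tau)-g(k^*)$, matching \eqref{eq:bound}. For the denominator, $\mathrm{Var}_{k^*}(L_\tau)=\mathbb{E}_{k^*}[L_\tau^2]-1$ is exactly the chi-squared divergence $\chi^2(P_{k^*+\tau}\Vert P_{k^*})$. The key observation is that under both hypotheses $Y$ is Gaussian with the \emph{same} covariance $\sigma^2 I$ (the additive noise is identical) and means that differ only through the deterministic step response. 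A completing-the-square Gaussian integral then yields the closed form $\chi^2(P_{k^*+\tau}\Vert P_{k^*})=\exp(\lVert\mu(k^*+\tau)-\mu(k^*)\rVert^2/\sigma^2)-1$, where $\mu(k^*)$ collects the noise-free outputs.

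It remains to match the exponent to $\mathcal{S}(\tau,M)$. Using the convolution form of the state, the mean output is $\mu_k(k^*)=Cx_k=\sum_{l=k^*}^{k-1}CA^{k-1-l}B$ with the input switching on at $k^*$; subtracting the analogous expression for $k^*+\tau$ shows that the two trajectories coincide for $k\leq k^*$ and that their difference telescopes to $\sum_{l=k^*}^{\min(k^*+\tau-1,\,k-1)}CA^{k-1-l}B$, with the $\min$ capturing the truncation when $k<k^*+\tau$. Summing the squares over $k=k^*+1,\dots,N$ reproduces \eqref{eq:sum}, and since $\tau$ ranges over a finite set the supremum is attained, giving the $\max$ in \eqref{eq:bound}. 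I expect the main obstacle to be this matching step, where the index ranges and the truncation in the inner sum must be tracked carefully; the Gaussian chi-squared evaluation is the other nontrivial computation, while the conceptual crux is recognizing HCR---rather than Cram\'er--Rao---as the appropriate tool for a discrete change-time parameter.
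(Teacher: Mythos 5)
Your proposal is correct and follows essentially the same route as the paper: both start from the Hammersley--Chapman--Robbins bound, identify the numerator via the bias definition, and evaluate the denominator through a completing-the-square Gaussian integral whose exponent is matched to $\mathcal{S}(\tau,M)$ by differencing the two step-response means. The only cosmetic difference is that you package the denominator as the closed-form chi-squared divergence between equal-covariance Gaussians, whereas the paper carries out the same computation by expanding the log-likelihood ratios explicitly; your restriction to $\tau \geq 1$ from the outset is also harmless, since the theorem's maximum ranges only over positive shifts (the paper's treatment of $\tau \leq -1$ merely shows nothing is lost by this restriction).
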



\begin{proof}
See the Appendix.
\end{proof}


It is worth reiterating that the bound in \eqref{eq:bound} is limited to single-input-single-output systems. Theorem~\ref{thm:main} illustrates all the important factors that should be considered when deploying a sensor that is susceptible to privacy attacks. In short, the bound is simply a function of the difference between the two most similar responses, in the $\ell _2$-norm. It gives insight into the important elements of linear systems that may cause information leaks. Let us analyze this bound and its implications through a simple, one-state linear time invariant system
\begin{equation} \label{eq:simpmodel}
M_1 : \begin{cases}
    x_{k+1} = ax_k+bu_k \\
    \hspace{11pt} y_k = cx_k  + e_k,
\end{cases}  
\end{equation}
where $x_k \in \mathbb{R}$, $y_k \in \mathbb{R}$, $u_k \in \{0, \, 1\}$, and $\mathbb{E} \left[ e_ke_l \right] = \sigma^2 \delta_{lk}$. The system parameters, $a$, $b$, and $c$ are scalar as well, with $a \geq 0$. For an unbiased estimator, the lower bound of the variance is given by the following corollary.

\begin{cor} \label{cor:cor} Consider any unbiased estimator of $k^*$, denoted by $\psi(Y,M_1)$. Then 
\begin{equation} \label{eq:simplebound}
    \mathrm{Var}(\psi(Y,M_1)|k^*) \geq  \frac{1}{\mathrm{e}^{{ \mathcal{S}(M_1)}}-1} \eqqcolon \mathcal{B}\left(M_1 \right),
\end{equation}
where
\begin{equation} \label{eq:simplesum}
    \mathcal{S}(M_1) = \frac{1}{\sigma^2}\sum \limits_{k=k^*+1}^N \left( ca^{k-1-k^*}b \right)^2.
\end{equation}
\end{cor}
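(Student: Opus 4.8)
The plan is to obtain Corollary~\ref{cor:cor} as a direct specialization of Theorem~\ref{thm:main}; no new estimation-theoretic machinery is required, and the entire argument reduces to simplifying the general bound in the scalar, unbiased setting.

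First I would impose unbiasedness. An unbiased estimator satisfies $g(k^*) = 0$ for every $k^*$, so the numerator $(\tau + g(k^*+\tau) - g(k^*))^2$ in \eqref{eq:bound} collapses to $\tau^2$, and Theorem~\ref{thm:main} yields $\mathrm{Var}(\psi(Y,M_1)|k^*) \geq \max_\tau \tau^2/(\mathrm{e}^{\mathcal{S}(\tau,M_1)}-1)$. Since the maximum over admissible $\tau$ dominates any single value, I would then evaluate the right-hand side at $\tau = 1$, which produces a valid lower bound of the claimed form $1/(\mathrm{e}^{\mathcal{S}(1,M_1)}-1)$.

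Second I would verify that $\mathcal{S}(1,M_1)$ coincides with the scalar sum $\mathcal{S}(M_1)$ in \eqref{eq:simplesum}. Setting $\tau = 1$ in \eqref{eq:sum}, the inner index $l$ ranges from $k^*$ to $\min(k^*, k-1)$; because the outer index obeys $k \geq k^*+1$, we have $k-1 \geq k^*$, so the upper limit is exactly $k^*$ and the inner sum collapses to the single term $CA^{k-1-k^*}B$. Substituting the scalar data $A=a$, $B=b$, $C=c$, squaring, and summing over $k$ reproduces \eqref{eq:simplesum} verbatim, completing the derivation.

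There is no deep obstacle here, as the result is a specialization. The only point needing care is the justification of the choice $\tau = 1$: the inequality holds because the maximum is at least the $\tau=1$ term, so \eqref{eq:simplebound} is a legitimate lower bound even if conservative. It is worth remarking (though not needed for the proof) that for a stable system this choice is also the tightest, reflecting that the two most indistinguishable step responses are those separated by a single sample; establishing that $\tau=1$ actually attains the maximum would instead require comparing the growth of $\mathcal{S}(\tau,M_1)$ in $\tau$ against the $\tau^2$ numerator, which the corollary deliberately avoids.
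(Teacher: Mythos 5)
Your proposal is correct and takes essentially the same route as the paper: the paper's own proof is also a direct specialization of Theorem~\ref{thm:main}, invoking the inequality $\bigl(\sum_{l=k^*}^{\min(k^*+\tau-1,k-1)} ca^{k-1-l}b\bigr)^2 \geq \bigl(ca^{k-1-k^*}b\bigr)^2$ for $k>k^*$ (valid since $a\geq 0$), which is just another way of saying that the $\tau=1$ term you evaluate gives the exponent $\mathcal{S}(M_1)$ and that $\mathcal{S}(\tau,M_1)\geq\mathcal{S}(M_1)$ for all admissible $\tau$. Your two steps --- collapsing the bias terms to get the $\tau^2$ numerator and verifying that the inner sum reduces to the single term $ca^{k-1-k^*}b$ at $\tau=1$ --- are precisely the content of the paper's one-line argument.
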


\begin{proof}
The result follows directly from Theorem~\ref{thm:main} and
\begin{equation*}
    \left(\sum \limits_{l= k^* }^{\mathrm{min}( k^* + {\tau} -1,k-1)} ca^{k-1-l}b \right)^2 \geq\left( ca^{k-1-k^*}b \right)^2, \quad k > k^*.
\end{equation*}
\end{proof}

With these expressions, it is possible to analyse many properties of the one-state system and how the different parameters determine $\mathcal{B}(M_1)$.

\subsection{Signal to Noise Ratio} \label{sec:snr}

From the bounds, one can see that a large contributing factor is how large of an effect the occupancy has on the signal, as captured by the $\frac{c^2b^2}{\sigma^2}$ factor. {This quantity is related to the signal to noise ratio (SNR) squared}. Without loss of generality, $c$ can be set to 1, since it is equivalent to changing the noise variance by a corresponding factor. The ratio $\frac{b^2}{\sigma^2}$ gives a clear indication that there is a trade-off between the noise variance and how much the input affects the state. 

The $\frac{c^2b^2}{\sigma^2}$ factor is one design parameter the defender can determine, through the choice of the noise variance $\sigma^2$. Naturally, decreasing the SNR will increase the difficulty for the attacker to estimate the change time $k^*$. But when it comes to discriminating between the degree of privacy different sensors provide, the impact of SNR can be sidestepped, since different variances for different sensors can be chosen.

\subsection{System Dynamics and Sampling}
Another important component is the system dynamics, which essentially determine for how many time steps there exists a difference between different output signals. Intuitively, the longer the difference between two signals, the easier it is to differentiate between them in the presence of noise. Corollary \ref{cor:cor} sheds light on why this happens through the summation of the powers of $a$. 

\begin{prop}
Consider the single state system in \eqref{eq:simpmodel} with $a=1$. Then 
\begin{equation*}
    \mathrm{Var}(\psi(Y,M_1)|k^*) \geq \mathcal{B}(M_1) \to 0, \text{ when } N \to \infty. 
\end{equation*}
\end{prop}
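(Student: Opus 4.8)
The plan is to evaluate the lower bound of Corollary~\ref{cor:cor} explicitly for $a=1$ and let the window length grow. The inequality $\mathrm{Var}(\psi(Y,M_1)|k^*)\geq\mathcal{B}(M_1)$ is immediate from Corollary~\ref{cor:cor}, so the entire content of the claim reduces to showing $\mathcal{B}(M_1)\to 0$. First I would substitute $a=1$ into the sum \eqref{eq:simplesum}. Since $a^{k-1-k^*}=1$ for every index $k$, each summand collapses to the constant $(cb)^2$, and counting the $N-k^*$ terms gives
\begin{equation*}
\mathcal{S}(M_1)=\frac{1}{\sigma^2}\sum_{k=k^*+1}^{N}(cb)^2=\frac{(cb)^2}{\sigma^2}\,(N-k^*).
\end{equation*}

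Next I would use this closed form to take the limit. Provided $cb\neq 0$, the quantity $\mathcal{S}(M_1)$ is an affine, strictly increasing function of $N$, so $\mathcal{S}(M_1)\to\infty$ as $N\to\infty$. Feeding this into \eqref{eq:simplebound}, the denominator $\mathrm{e}^{\mathcal{S}(M_1)}-1$ diverges, whence $\mathcal{B}(M_1)=1/(\mathrm{e}^{\mathcal{S}(M_1)}-1)\to 0$. Chaining with the corollary's inequality then yields the stated conclusion.

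I do not expect a genuine obstacle here, as the argument is a one-line computation once the geometric sum degenerates; the only point deserving care is the boundary case $cb=0$. If the input never reaches the output then $\mathcal{S}(M_1)\equiv 0$ and the bound is identically $+\infty$, so the claim is vacuous. Because the privacy problem is only meaningful when occupancy actually perturbs the measurement, I would carry $cb\neq 0$ as a standing assumption, which is consistent with the normalization $c=1$ adopted in Section~\ref{sec:snr}. The substantive takeaway to emphasize is interpretive rather than technical: when $a=1$ the system behaves as a pure integrator that keeps accumulating a constant offset after the change, so each additional sample contributes equally to distinguishing the pre- and post-change regimes, and sufficiently long records drive the achievable privacy to zero.
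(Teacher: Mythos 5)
Your proposal is correct and follows essentially the same route as the paper: substitute $a=1$ into \eqref{eq:simplesum} to get $\mathcal{S}(M_1)=\frac{c^2b^2}{\sigma^2}(N-k^*)\to\infty$, and conclude that $\mathcal{B}(M_1)=1/(\mathrm{e}^{\mathcal{S}(M_1)}-1)\to 0$. Your explicit handling of the degenerate case $cb=0$ is a small refinement the paper leaves implicit, but it does not change the argument.
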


\begin{proof}
Insert $a=1$ in the summation \eqref{eq:simplesum}, then
\begin{equation*}
    \mathcal{S}(M_1)=\frac{c^2b^2}{\sigma^2}(N-k^*) \to \infty, \; \; \text{as } N \to \infty,
\end{equation*}
which in turn gives that the lower bound, $\mathcal{B}(M_1) \to 0$.
\end{proof}


Thus any good estimator will always be able to estimate the step change of integrator systems, given that enough samples have been collected. Because step changes that are initiated at different times never converge to the same value, the possibility to sample a large amount of useful data is created independently of the sample rate. Therefore, an estimator will be able to obtain zero variance. The bound also goes to zero for systems with unstable dynamics ($\Vert a \Vert >1$). In the unstable case, however, the rate of convergence to zero for the bound in \eqref{eq:simplebound} is greater the \enquote{more} unstable the system is.

In fact, the number of samples taken during the time steps when the sensor signal is different from other potential signals seems to be the fundamental quantity which determines the lower bound. The previous paragraph discusses this phenomenon for integrators and unstable systems. Asymptotically stable systems ($\Vert a \Vert < 1$) have a similar property.

\begin{prop} \label{prop:fasteig}
Consider two systems with only one-state as in \eqref{eq:simpmodel}, $M_1^1$ and $M_1^2$. Additionally, let $1 > a_2 > a_1 > 0$ with all other system parameters remaining the same. Then, the lower bounds satisfy
\begin{equation*}
    \mathcal{B}\left( M_1^1\right)>\mathcal{B}\left(M_1^2 \right).
\end{equation*}
\end{prop}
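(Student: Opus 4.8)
The plan is to reduce the claim to a monotonicity statement about the exponent $\mathcal{S}(M_1)$ and then exploit that $\mathcal{B}$ is a decreasing function of $\mathcal{S}$. First I would invoke Corollary~\ref{cor:cor} and re-index the sum in \eqref{eq:simplesum} with $j = k-1-k^*$ to write
\begin{equation*}
    \mathcal{S}(M_1) = \frac{c^2 b^2}{\sigma^2}\sum_{j=0}^{N-1-k^*} a^{2j},
\end{equation*}
which isolates the eigenvalue $a$ as the only quantity in the bound that differs between $M_1^1$ and $M_1^2$, since all other parameters are held fixed.

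Next I would observe that the map $\mathcal{S} \mapsto 1/(\mathrm{e}^{\mathcal{S}}-1)$ is strictly decreasing for $\mathcal{S}>0$, because $\mathrm{e}^{\mathcal{S}}-1$ is positive and strictly increasing there. Consequently the desired inequality $\mathcal{B}(M_1^1) > \mathcal{B}(M_1^2)$ is \emph{equivalent} to the reversed inequality on the exponents, $\mathcal{S}(M_1^1) < \mathcal{S}(M_1^2)$; in other words it suffices to show that $\mathcal{S}$ is strictly increasing in $a$ on the interval $(0,1)$.

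The core step is then a term-by-term comparison of the two sums. Because $0 < a_1 < a_2 < 1$, for each $j \geq 1$ raising to the even power preserves the order, so $a_1^{2j} < a_2^{2j}$, while the $j=0$ terms coincide at $1$. Summing over $j$ gives $\sum_j a_1^{2j} < \sum_j a_2^{2j}$, hence $\mathcal{S}(M_1^1) < \mathcal{S}(M_1^2)$, and the monotonicity of $\mathcal{B}$ completes the argument. (I would prefer this elementary comparison over manipulating the closed-form geometric expression $\tfrac{1-a^{2(N-k^*)}}{1-a^2}$, whose monotonicity in $a$ is less transparent.)

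I expect the only delicate point to be \emph{strictness}. The comparison is strict only when the summation actually contains a term with $j \geq 1$, i.e.\ when $N \geq k^*+2$. If the window is so short that only the $j=0$ term survives ($N = k^*+1$), both exponents equal $c^2b^2/\sigma^2$ and the bounds coincide, so the statement should be read under the standing assumption that enough samples are collected; everything else is routine.
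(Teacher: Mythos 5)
Your proof takes essentially the same route as the paper's: the paper's entire argument is the one-line claim that $\mathcal{S}(M_1^2) > \mathcal{S}(M_1^1)$ (with the monotonicity of $\mathcal{S} \mapsto 1/(\mathrm{e}^{\mathcal{S}}-1)$ left implicit), and you simply make that claim explicit via the re-indexed geometric sum and the term-by-term comparison. Your caveat about strictness is a genuine, if minor, refinement the paper omits: the strict inequality does require at least one summand with $j \geq 1$, i.e.\ $N \geq k^* + 2$.
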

\vspace{2pt}
\begin{proof}
    The results follows from $\mathcal{S}(M_1^2) > \mathcal{S}(M_1^1)$.
\end{proof}

Proposition~\ref{prop:fasteig} shows that it is easier to estimate the change time in systems with slow dynamics, as opposed to systems with fast dynamics. Slower systems allow for more samples to be gathered during time steps when the signals are different. This phenomenon can be seen directly in the construction of the system matrix from zero-order hold sampling of a continuous linear time invariant system,
\begin{equation*}
\begin{aligned}
        \dot x(t) & = fx(t) + hu(t) \\ \Rightarrow x \left( \left(k+1\right)\Delta t\right) &= a x\left(k \Delta t\right) + b u\left(k \Delta t\right),
        \end{aligned}
\end{equation*}
where $a = \mathrm{e}^{-f \Delta t}$, and that $a \to 1$ (an integrator) as $\Delta t \to 0$. The number of samples over the same time window goes to infinity, causing $\mathcal{B}(M_1)$ to decrease. Thus the designer can influence the lower bound in \eqref{eq:simplebound} by changing the sampling rate.

Naturally, the number of samples can also be increased by increasing the time window, $N$, over which the sampling takes place. As stated previously, for integrator systems, $\mathcal{S}$ increases linearly and for unstable systems the summands increase geometrically. For stable systems though, the summands decrease geometrically so that, beyond a certain time step, the signal becomes similar to another signal that was initiated at a different time step. Samples beyond this time step will not produce any additional information to the estimator. Therefore, any additional sampling beyond this time step will no longer improve the estimation, which is of interest when designing a finite time horizon estimator.

\subsection{Sensor Design} \label{sec:sensorplacement}

Much of the intuition from the one-state system carries over to the multiple state systems with some additional technicalities. Specifically, the direction of vector $B$ and how it is transformed by $A$ affects the estimation. Typically, the designer of a system has some degree of choice when it comes to designing $C$, by placing the sensor in a particular place, which introduces another design parameter.



Denote the eigenvalues of $A$ by $\lambda_i$, $i \in \{1, 2, \dots, n\}$ with the corresponding eigenvectors, $v_i$ and where $|\lambda_{i}| \geq |\lambda_{i+1}|$. Assuming the eigenvectors $v_i$ form a basis, consider the components of $B$ with respect to this basis,
\begin{equation*}
    B=\sum \limits_{i=1}^{n}b_iv_i.
\end{equation*}

Using this basis, we can write the bound in \eqref{eq:bound} as,
\begin{equation} \label{eq:sumeig}
    \mathcal{S} = \frac{1}{\sigma^2}\sum \limits_{k= k^* + 1}^N\left(\sum \limits_{l= k^* }^{\mathrm{min}( k^* + {\tau} -1,k-1)} \sum \limits_{i=1}^{n}\lambda_i^{k-1-l} b_i Cv_i \right)^2.
\end{equation}

By designing $C$ in \eqref{eq:sumeig}, it is possible to change the bound on the variance in \eqref{eq:bound}. Choosing $C$ to be perpendicular to the largest eigenvalues such that $b_i\neq 0$, makes the sum converge as fast as possible. The privacy leak is then contained to the first few time steps. However, the variance is not necessarily increased, since the largest summands often occur in the beginning of the summation. This choice of $C$ might make these terms very large, thus making it easier to estimate the change.


Conversely, one may design $C$ such that $CA^lB$ is small for the first few $l \geq 0$. The privacy leak in the first few time steps will then be minimal. However, the variance will not necessarily be decreased in this case either. If there is an eigenvalue that is very close to 1 which is not removed by $C$ and $b_i \neq 0$, then the sum \eqref{eq:sum} will be very large, due to the slow convergence of the summands to zero.



\section{Example Revisited} \label{sec:examplrevis}

Let us return to the example from Section~\ref{sec:example} of an adversary gaining access to one of the sensors in Apartment~2 of the KTH Testbed. 
The lower bound in \eqref{eq:bound} is calculated using the identified 8-state system from Section~\ref{sec:attackmodel}. In Table~\ref{tab:largetest}, the lower bound $\mathcal{B}$ for the two signals are presented with different $\sigma^2$. The empirical variance of the estimate, $\hat V$, of 1000 trials is shown as well. One may see that the bound holds in all cases, and the difference is less than an order of magnitude in some cases. Also, both $\mathcal{B}$ and $\hat V$ are large for small SNR, as was predicted in Section~\ref{sec:snr}. In Fig.~\ref{fig:temp}, we provide a realization of the measured step change in temperature from Apartment~2 using noise with $\sigma^2=2.25$, together with the simulated step change and the estimated step change for 1000 trials.




\begin{table}
   \caption{SNR, Empirical variance $\hat V$ and the lower bound $\mathcal{B}$ under different variances on the additive noise}
    \label{tab:largetest}
    \begin{center}
\begin{tabular}{|c|c|c|c|c|c|}
\hline
    Model & \makecell{$\sigma^2$} &  \makecell{SNR} & \makecell{ $\hat V$ [min$^2$]} & \makecell{$\mathcal{B}$ [min$^2$]} \\ \hline
     Temperature, $M^t$ & 0.25  & 16.9 & 36.7 & 11.3 \\ 
     Humidity, $M^h$ & 0.25  & 204 & 11.9 & $3 \cdot 10^{-8}$ \\ 
     Temperature, $M^t$ & 2.25  & 1.87 & 3490 & 300 \\ 
     Humidity, $M^h$ & 2.25  & 22.6 & 48 & 7.74 \\ \hline
\end{tabular}
\end{center}
 
\end{table}

The empirical variance, shown in Table~\ref{tab:largetest}, is still larger than $\mathcal{B}$. One reason for this discrepancy is that there are more factors corrupting the data. For example, the data is generated together with a weather model. 
Privacy is somewhat increased due to the weather, since the attacker's weather model does not perfectly capture the impact that it has on Apartment~2. The effect of the weather can be modelled as process noise, which in turn, adds more stochasticity to the system, further masking the input signal. 
Nonetheless, the discrepancy in the variance is less than an order of magnitude smaller for the temperature, which gives one an idea of how tight the bound is. Also, a small increase in the noise variance can increase the bound by a large magnitude, as seen in Table~\ref{tab:largetest}. Even though the SNR is very large in one case, the relative small addition of measurement noise increases the the empirical variance $\hat V$ by an entire time step. The large increase shows that model uncertainties may amplify the privacy introduced by measurement noise.


\begin{figure}
    \centering
    \begin{overpic}[width=1\linewidth,trim={4cm 10.4cm 3.9cm 10.8cm},clip]{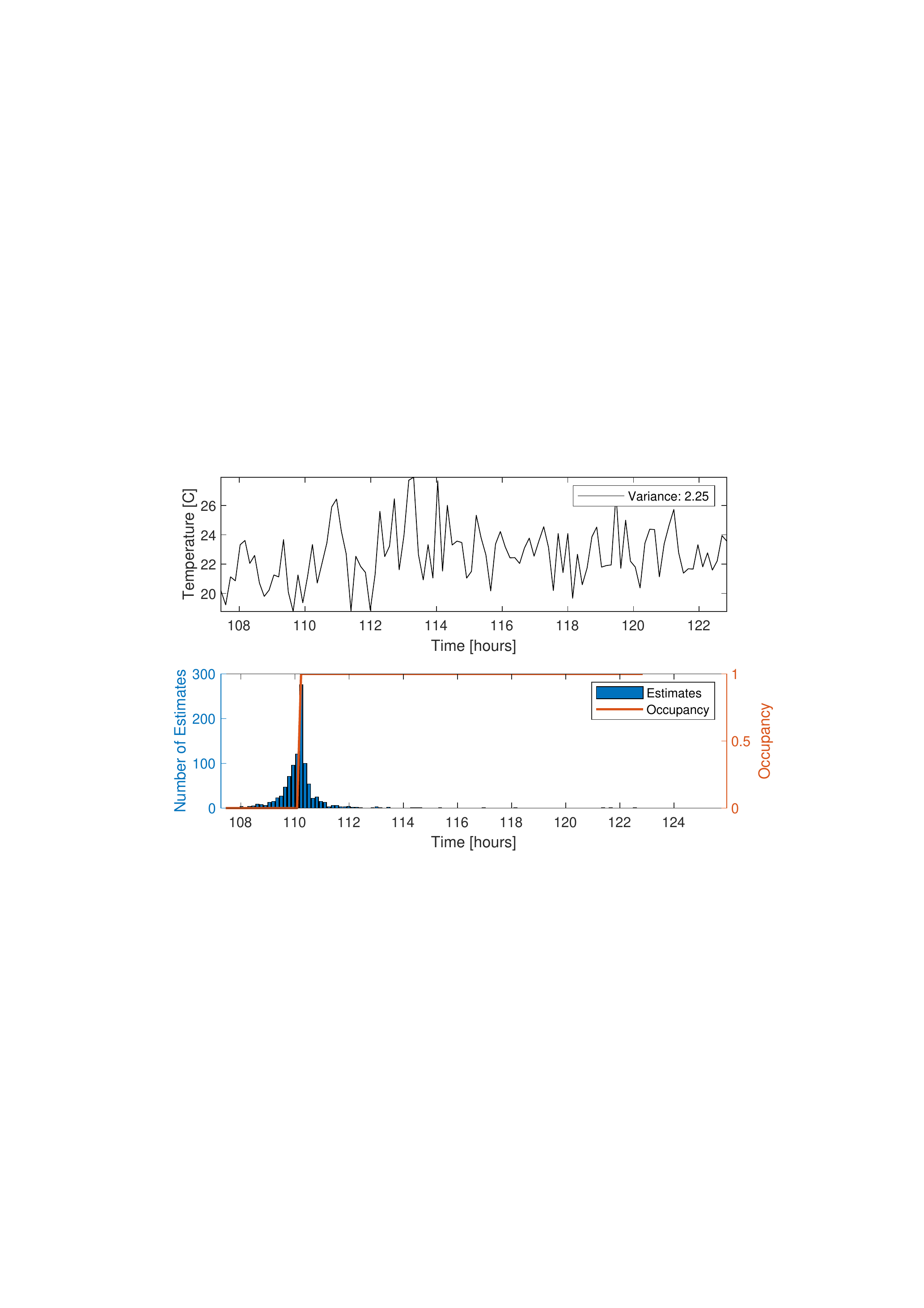}
    \put(10,60){a)}
    \put(10,25){b)}
    \end{overpic}
    \caption{In a), the temperature measurements are shown with additive Gaussian noise. In b), the histogram of 1000 estimates are shown together with the occupancy change.}
    \label{fig:temp}
\end{figure}




One may see that the empirical variance is much lower for the humidity data compared to the temperature, which is also reflected in the lower bound in Table~\ref{tab:largetest}. Upon closer inspection, one can see that the SNR is much larger for the humidity compared to the temperature, which is probably the main reason for the smaller variance in the humidity measurements. A designer might think that the humidity sensor should be secured against privacy leaks. However the comparison is not fair, since the defender is not restricted to using the same noise for the different types of sensors.




In order to overcome this issue and isolate the effect the system poles have on the estimation, we normalize all inputs and outputs so that they are between $0$ and $1$. Table~\ref{tab:smalltest} presents the results of estimating the change time for a 1000 performed trials. 
Now, the lower bound $\mathcal{B}(M)$ and the empirical variance are lower for the temperature compared to the humidity. These results imply that the defender should make the temperature sensors secure.

By inspecting the slowest eigenvalues of the two models, one can see that the humidity model has a slightly larger maximum eigenvalue, and thus, should be easier to estimate. Nevertheless, the impact of the largest eigenvalues is not reflected in either $\mathcal{B}$ or $\hat V$. Recall the discussion in {Section~\ref{sec:sensorplacement}}; there it is argued that not only the largest eigenvalue determines the variance of the estimated change time, but also the corresponding eigenvectors, which should be a component of $B$ and not be perpendicular to $C^\top$. In fact, for both models model, $b_1 \approx 0$. The second largest eigenvalues in both models are also shown in Table~\ref{tab:smalltest}, whose eigenvectors are compatible with $B$ and $C$. Nonetheless, since the humidity model still is much slower than the temperature model, it should be easier to estimate. In fact, the eigenvalues impact all but the first summand of $\mathcal{S}(M)$, and a slow eigenvalue causes a large growth in $\mathcal{S}(M)$ as $N$ grows. In Table~\ref{tab:smalltest}, one can see that this growth for the humidity model is not large enough to even overcome the first summand in $\mathcal{S}(M^t)$. 

The privacy leak in $M^h$ for the first few time steps, is minimal since the corresponding $C^\top$ and $B$ are perpendicular to each other. For $M^t$, the privacy leak is concentrated on the first time step, which is so large that even many samples of the humidity would not be able to estimate a better change time. With these results, one can say that the temperature causes a larger privacy leak, when the defender applies the same SNR to both models.

\begin{table} 
   \caption{The two largest eigenvalues,  empirical variance $\hat V$ and the lower bound $\mathcal{B}$ when noise with SNR$=4$.}
    \label{tab:smalltest}
    \begin{center}
    \resizebox{\columnwidth}{!}{%
\begin{tabular}{|c|c|c|c|c|c|c|}
\hline
 Model & \makecell{$\big |\lambda_{1} \big|$} &  $\big |\lambda_{2} \big |$ & $\left(\frac{CB}{\sigma}\right)^2$ &$\mathcal{S}$ & \makecell{ $\hat V$ [min$^2$]} & \makecell{$\mathcal{B}$ [min$^2$]} \\ \hline
     $M^t$ & 0.995 & 0.84 & 0.134  &  0.142   & 342 & 101  \\ 
     $M^h$ & 0.996 & 0.97 & 0.018 & 0.088 & 545 & 184 \\ \hline
\end{tabular}
}
\end{center}
\end{table}

\section{Conclusions} \label{sec:end}
We have presented a method for analyzing how susceptible sensors in smart buildings are to privacy leakage in terms of occupancy change{, where by privacy we mean whether or not an adversary can infer user behavior from a set of measurements}. It is based on a fundamental lower bound on the variance of estimated change time of occupancy for any estimator. The lower bound was derived and then used to analyze the privacy properties of a one-state system. The results of this analysis showed how the SNR impacts the lower bound on the variance. Additionally, we analyzed the impact of the poles on the estimation, showing that unstable and integrator systems are the easiest to estimate, followed by slow stable systems; fast, stable systems are the hardest to estimate. Finally, it was shown that a slow eigenvalue is not enough to draw a conclusion about the privacy leakage. Thus, the defender can use these results to decrease the privacy leakage of a system or it can use the lower bound to ensure privacy from an unknown attacker. These results were then validated on data from a simulator of KTH Testbed using one particular type of estimator.

As mentioned previously, an extension of the results to privacy leakage for combinations of sensors is a track for future work. Another extension is to investigate how process noise increases privacy. 
The tightness of the bound in \eqref{eq:bound}  will also be explored in future research.

\section*{Acknowledgements}
This work was supported by the Swedish Foundation for Strategic Research through the CLAS project (grant RIT17-0046). The authors would like to thank Alexandre Proutiere for 
his feedback.

\bibliographystyle{unsrt}
\bibliography{bibliography.bib}

\newpage
\newpage
\onecolumn

\appendix[Proof of Theorem~\ref{thm:main}]

The estimator, $\psi(Y,M)$, uses the measurements $Y$ and the model $M$ in order to estimate the change time $k^*$. Here, $k^*$ is a parameter which determines the probability distribution of $Y$. The minimum variance of the estimator is given by the Hammersley-Chapman-Robbins bound~\cite{hammersley1950},
\begin{equation*}
    \mathrm{Var}(\psi(Y,M) \big |k^*) \geq \sup \limits_{\tau \neq 0} \frac{\left( \mathbb{E}\left[ \psi(Y,M) \big| k^* + \tau \right] - \mathbb{E}\left[ \psi(Y,M) \big | k^* \right]  \right)^2}{\mathbb{E} \left[ \left( \frac{P\left(Y | k^*+\tau\right)}{P\left(Y | k^*\right)} -1 \right)^2 \bigg | k^*\right]},
\end{equation*}
where $\tau \in \mathbb{Z}$ and $P(Y|k^*)$ is the probability of obtaining measurements $Y$, conditioned on the change time is $k^*$. Evaluating the expectation in the denominator gives
\begin{equation} \label{eq:boundproof}
    \mathrm{Var}(\psi(Y,M) \big |k^*) \geq \sup \limits_{\tau \neq 0} \frac{\left( \mathbb{E}\left[ \psi(Y,M) \big| k^* + \tau \right] - \mathbb{E}\left[ \psi(Y,M) \big | k^* \right]  \right)^2}{\displaystyle\int \limits_{\mathbb{R}^N} \frac{P\left(Y | k^*+\tau\right)^2}{P\left(Y | k^*\right)} \mathrm{d}Y-1}.
\end{equation}

Since $\frac{P\left(Y | k^*+\tau\right)^2}{P\left(Y | k^*\right)} = \mathrm{e}^{2\log{P(Y_{}| k^*+\tau)}-\log{P(Y_{}| k^*)}}$, we write for $\tau \geq 1$,
\begin{align*}
    & \log{P(Y_{}| k^*+\tau)}-\log{P(Y_{}| k^*)} = \\ 
    & - \sum \limits_ {k=1}^N \frac{\left(y_k-CA^kx_0 - \sum \limits_{l = k^* + \tau}^{k-1}CA^{k-l-1}B \right)^2}{2 \sigma^2} + \sum \limits_ {k=1}^N \frac{\left(y_k-CA^kx_0 - \sum \limits_{l = k^*}^{k-1}CA^{k-l-1}B \right)^2}{2 \sigma^2} =  \\
    & -\frac{1}{2 \sigma^2} \sum \limits_{k=k^*+1}^N   (\sum \limits_{l = k^*}^{\mathrm{min}( k^* + \tau -1,k-1)} CA^{k-1-l}B)(2y_k-(2CA^kx_0 + \sum \limits_{l = k^*}^{k-1}CA^{k-l-1}B + \sum \limits_{l = k^* + \tau}^{k-1}CA^{k-l-1}B)).
\end{align*}

Continuing, we see that,
\begin{align*} 
    & 2\log{P(Y_{}| k^*+1)}-\log{P(Y_{}| k^*)} = 2(\log{P(Y_{}| k^*+1)}-\log{P(Y_{}| k^*)})+\log{P(Y_{}| k^*)} = \\
    & -\frac{1}{\sigma^2} \sum \limits_{k=k^*+1}^N   (\sum \limits_{l=k^*}^{\mathrm{min}( k^* + \tau -1,k-1)} CA^{k-1-l}B)(2y_k-(2CA^kx_0  + \sum \limits_{l= k^*}^{k-1}CA^{k-l-1}B + \sum \limits_{l=k^* + \tau}^{k-1}CA^{k-l-1}B)) \\ 
    & - \sum \limits_ {k=1}^N \frac{\left(y_k-CA^kx_0 - \sum \limits_{l= k^*}^{k-1}CA^{k-l-1}B \right)^2}{2 \sigma^2} = \\
    & - \frac{1}{2\sigma^2}\sum \limits_{k=k^* +1}^N\left(y_k-CA^kx_0 - \sum \limits_{l=k^*}^{k-1}CA^{k-l-1}B + 2\left(\sum \limits_{l=k^*}^{\mathrm{min}( k^* + \tau -1,k-1)} CA^{k-1-l}B\right)\right)^2 \\
    & + \frac{1}{\sigma^2}\sum \limits_{k=k^* +1}^N\left(\sum \limits_{l=k^*}^{\mathrm{min}( k^* + \tau -1,k-1)} CA^{k-1-l}B \right)^2 - \sum \limits_ {k=1}^{k^*} \frac{\left(y_k-CA^kx_0 - \sum \limits_{l=k^*}^{k-1}CA^{k-l-1}B \right)^2}{2 \sigma^2}.
\end{align*}

Inserting this expression into the bound in \eqref{eq:boundproof}, evaluating the integral, and setting $\mathbb{E}\left[ \psi(Y,M) \big | k^* \right] = k^* + g(k^*)$, we obtain
\begin{equation}\label{eq:proofres}
    \mathrm{Var}(\psi(Y,M) \big |k^*) \geq \max \limits_{\tau \geq 1} \frac{ \left(\tau+g(k^*+\tau) - g(k^*)\right)^2    }{\mathrm{e}^{\mathcal{S}(\tau,M) }-1}, \text{ with } \quad \mathcal{S}(\tau,M) = \frac{1}{\sigma^2}\sum \limits_{k=k^* +1}^N\left(\sum \limits_{l=k^*}^{\mathrm{min}(k^* + \tau -1,k-1)} CA^{k-1-l}B \right)^2.
\end{equation}

For $\tau \leq -1$, an equivalent calculation can be made giving the same expression as \eqref{eq:proofres}, but replacing $\mathcal{S}$ with
\begin{equation*}
    \mathcal{S}_-(\tau,M) = \frac{1}{\sigma^2}\sum \limits_{k=\hat k+\tau +1}^N\left(\sum \limits_{l=\hat k+\tau}^{\mathrm{min}(\hat k -1,k-1)} CA^{k-1-l}B \right)^2.
\end{equation*}

However, since
\begin{equation*}
    \mathcal{S}(|\tau|,M) \leq \mathcal{S}_-(-|\tau|,M),
\end{equation*}
for each positive integer $|\tau|$, we can ignore the $\tau \leq -1$ cases. \hspace*{\fill}\QED

\end{document}